\theoremstyle{plain}
\newtheorem{thm}{Theorem}[section]
  \newtheorem{lem}[thm]{Lemma}
  \newtheorem{claim}[thm]{Claim}
\theoremstyle{definition}
\DeclareFontFamily{U}{mathx}{\hyphenchar\font45}
\DeclareFontShape{U}{mathx}{m}{n}{<-> mathx10}{}
\DeclareSymbolFont{mathx}{U}{mathx}{m}{n}
\DeclareMathAccent{\widebar}{0}{mathx}{"73}
\newcommand{\Oh}{\mathcal{O}}
\newcommand{\R}{\mathbb{R}}
\newcommand{\Z}{\mathbb{Z}}
\newcommand{\F}{\mathbb{F}}
\newcommand{\E}{\mathbb{E}}
\newcommand{\Cov}{\text{Cov}}
\newcommand{\Var}{\operatorname{Var}}
\DeclareMathOperator{\argmin}{arg\,min}
\newcommand{\Gr}[1]{\mathbb{G}_{#1}}
\newcommand{\one}{\mathbbm{1}}
\def\bp{\begin{proof}}
\def\ep{\end{proof}}
\def\P{\mathbb{P}}
\def\E{\mathbb{E}}
\begin{document}

\begin{frontmatter}[classification=text]
%% EDITOR: this will force the keywords to appear right after the Abstract.
%%   If the abstract is too long and would force the keywords off the
%%   front page, please comment out % [classification=text] above
%%   This way the keywords will be floated on the bottom of the first page
%%   even though the Abstract spills over to the next page.

%%% AUTHOR: Title goes here.  This line is optional.  You must use it
%%   if title has footnote attached or requires nontrivial typesetting,
%%   e.g., inclusion of linebreaks to force nice layout.
\title{An Improved Lower Bound for Sparse Reconstruction from Subsampled Walsh Matrices} %% please capitalize all significant words

%%% AUTHOR:
%%% List all authors. If you wish, place grant acknowledgements in \thanks.
%%% In brackets include a short tag for each author.
\author[pgom]{Jaros\l aw B\l asiok\thanks{Supported by Siebel Foundation.}}
\author[johan]{Patrick Lopatto\thanks{Partially supported by the NSF Graduate Research Fellowship Program under grant DGE-1144152.}}
\author[laci]{Kyle Luh\thanks{Partially supported by NSF Postdoctoral Fellowship DMS-1702533.}}
\author[andy]{Jake Marcinek}
\author[andy]{Shravas Rao\thanks{Supported by the Simons Collaboration on Algorithms and Geometry.}}

%%% AUTHOR: Abstract goes here
\begin{abstract}
		We give a short argument that yields a new lower bound on the number of uniformly and independently subsampled rows from a bounded, orthonormal matrix necessary to form a matrix with the restricted isometry property.  We show that a matrix formed by uniformly and independently subsampling rows of an $N \times N$ Walsh matrix contains a $K$-sparse vector in the kernel, unless the number of subsampled rows is $\Omega(K \log K \log (N/K))$ --- our lower bound applies whenever $\min(K, N/K) > \log^C N$. Containing a sparse vector in the kernel precludes not only the restricted isometry property, but more generally the application of those matrices for uniform sparse recovery.
\end{abstract}
\end{frontmatter}

%%% AUTHOR: body of paper starts here
\section{Introduction}
In their seminal work on sparse recovery \cite{candes2006near}, Cand\`es and Tao were led to the notion of the \emph{restricted isometry property} (RIP).  A $q \times N$ matrix $M$ has the restricted isometry property of order $K$ with constant $\delta >0$ if for all $K$-sparse vectors $x \in \mathbb{C}^N$ (i.e. vectors with at most $K$ nonzero entries) we have
$$
(1-\delta) \|x\|_2^2 \leq \|M x\|_2^2 \leq (1+\delta) \|x\|_2^2.
$$ 

The significance of this property is that it guarantees that one can recover an approximately $K$-sparse vector $x^*$ from $Mx^*$ via a convex program \cite{candes2006near}. Specifically, they showed that if a matrix $M$ satisfies $(2K, \sqrt{2}-1)$-RIP, then the minimizer 
\begin{equation*}
		\tilde{x} := \argmin_{x : Mx = Mx^{*}} \|x\|_1,
\end{equation*}
satisfies 
\begin{equation*}
		\|\tilde{x} - x^*\|_2 \leq \frac{1}{\sqrt{k}} \|x^{*} - x^{*}_K\|_1,
\end{equation*}
where $x^{*}_K$ is the best $K$-sparse approximation of $x^*$ --- in particular when $x^*$ is exactly $K$-sparse, it can be efficiently recovered from $Mx^*$ without any error.

In applications, $q$ is the number of measurements needed to recover a sparse signal. Therefore, it is of interest to understand the minimal number of rows needed in a matrix with the RIP property.  

It is known that for a properly normalized matrix with independent gaussian entries, $q = \Oh(K \log(N/K))$ suffices to generate a RIP matrix with high probability (e.g. \cite{foucart2013compressive}).  Yet, it is often beneficial to have more structure in the matrix $M$ %, and matrices with orthonormal columns and bounded entries (with the proper normalization) have received the most attention in the applied mathematics community 
\cite{rauhut2010compressive}. For example, if the matrix $M$ is a submatrix of the discrete 
Fourier transform matrix, then the fast Fourier transform algorithm allows fast matrix--vector multiplication, speeding up the run time of the recovery algorithm \cite[Chapter 12]{foucart2013compressive}.  Additionally, generating a random submatrix requires fewer random bits and less storage space.   %The primary example of such a matrix is the discrete Fourier transform (DFT).  

The first bound on the number of uniformly and independently subsampled rows from a Fourier matrix necessary for recovery appeared in the groundbreaking work \cite{candes2006near}.  They showed that if one randomly subsamples rows so that the expected number of rows is $\Oh(K \cdot \log^6 N)$, then concatenating these rows forms a RIP matrix with high probability, after appropriate normalization.  Rudelson and Verhsynin later improved this bound to  $\Oh(K\cdot \log^2 K \cdot \log(K \log N) \cdot \log N)$ via a gaussian process argument involving chaining techniques \cite{rudelson2008sparse}.  Their proof was then streamlined and their probability bounds strengthened \cite{dirksen2015tail, rauhut2010compressive}. Cheraghchi, Guruswami, and Velingker then proved a bound of $\Oh(K \cdot \log^3 K \cdot \log N)$ \cite{cheraghchi2013rip}, and Bourgain established the bound $\Oh(K \cdot \log K \cdot \log^2 N)$ \cite{bourain2014rip}. The sharpest result in this direction is due to Haviv and Regev, who showed the upper bound $\Oh(K\cdot \log^2 K \cdot \log N)$ through a delicate application of the probabilistic method \cite{haviv2017restricted}.  It is widely conjectured that for the discrete Fourier transform $q = \Oh(K \log N)$ suffices \cite{rudelson2008sparse}.

It turns out that all proofs in this line of work, including the strongest known upper bound \cite{haviv2017restricted}, apply in a more general setting where random matrix $M$ is constructed by uniformly and independently subsampling rows of any \emph{bounded orthonormal matrix} --- that is an orthonormal matrix with all entries bounded in magnitude by $\frac{B}{\sqrt{N}}$ for some constant $B$. The matrix of the Discrete Fourier Transform satisfies this property with $B=1$.

This paper addresses the problem of determining a \emph{necessary} number of samples for reconstruction. Our contribution is that --- surprisingly --- for general bounded orthonormal matrices, and for a certain range of $K$, $\Omega( K \log^2 N)$ samples are needed when sampling uniformly and independently. In particular, only a gap of $\log K$ remains between our bound and the best known upper bound. We improve the previous best lower bound $\Omega(K\cdot \log N)$ due to Bandeira, Lewis, and Mixon \cite{bandeira2017discrete} which applied to the DFT matrix. Those in turn improve upon more general lower bounds $\Omega ( K \cdot \log(N/K))$ on the number of rows for \emph{any} matrix that satisfies the RIP property \cite{ba2010lower, garnaev1984widths, kashin1977diameters, nelson2013sparsity}.  

In the proof we consider an example of a bounded orthonormal matrix, the Walsh matrix (i.e. the matrix of the Fourier transform on the additive group $\Z_2^n$), and we show that for this specific matrix at least $\Omega( K \log K \log N/K)$ samples is required. More concretely, by a second moment argument, we demonstrate that for fewer than $\Oh(K \log K \log N/K)$ subsampled rows, with high probability there exists a $K$-sparse vector in the kernel --- ruling out both the RIP property, and in general any hope for sparse recovery algorithm with those matrices. The same proof can be applied more generally to show that for any prime $r$ one needs to subsample at least $\Omega(K \log K \log(N/K) / \log(r))$ rows of a matrix corresponding to Fourier transform on the additive group $\Z_r^n$ --- for the sake of simplicity of the argument we do not elaborate on this.
  
\section{Preliminaries}
Throughout this note, we use $\log$ to denote the base $2$ logarithm.
For an integer $n\ge 1$, we set $N = 2^n$ and fix a bijection between $[N]$ and $\Z_2^n$; this identification remains in force for the rest of the paper.

We say a function $\chi: \Z_2^n \rightarrow \{\pm 1\}$ is a \emph{character} if it is a group homomorphism.  To an element $a \in \Z_2^n$, we associate the character 
$$
\chi_a(x) = (-1)^{\langle a, x \rangle}
$$
for all $x \in \Z_2^n$.  
The Fourier transform of a function $f : \Z_2^n \rightarrow \mathbb{C}$ is defined to be 
$$
\hat{f}(a) =  \frac{1}{\sqrt{N}} \sum_{x \in \Z_2^n} f(x) \chi_a(x)
$$ 
for all $a \in \Z_2^n$.  
Let $H$ be the $N \times N$ matrix representing the Fourier transform on the group $\Z_2^n$.  In other words,
$$
H_{ij} = \frac{1}{\sqrt{N}} (-1)^{\sum_{k=1}^n i_k j_k}.
$$
When normalized to have $\pm 1$ entries, the matrix $H$ is also known as a Walsh matrix.
We refer the reader to \cite{terras1999fourier} for a thorough discussion of
 Fourier analysis on finite groups.

The Grassmannian $\Gr{n,d} = \Gr{n,d}(\Z_2)$ is defined as the collection of vector subspaces of $\Z_2^n$ of dimension $d$. 
Our proof uses the following well-known result about the Fourier transform.
\begin{lem} \label{lem:ortho}
For a subspace $V \in \Gr{n,d}$, we let $\one_V \in \R^N$ be the vector corresponding to the indicator function for $V$ with the normalization $\| \one_V \|_2 =1$. Then
$$
H \one_V = \one_{V^\perp}.
$$
where $V^\perp$ is the orthogonal complement of $V$.
\end{lem}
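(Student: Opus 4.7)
The plan is to unpack the definitions on both sides and reduce the identity to the standard character-orthogonality fact that a nontrivial character sums to zero on a subgroup. All the work is in carefully tracking the normalization factors.

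First, I would compute $\one_V$ explicitly. Since $|V| = 2^d$ and the normalization $\|\one_V\|_2 = 1$ forces each nonzero entry to have value $1/\sqrt{2^d}$, we have $\one_V(x) = 2^{-d/2} \indicator{x \in V}$. Then by the definition of the Fourier transform from the Preliminaries,
\[
(H \one_V)(a) \;=\; \frac{1}{\sqrt{N}} \sum_{x \in \Z_2^n} \one_V(x)\, \chi_a(x) \;=\; \frac{1}{\sqrt{N}\cdot 2^{d/2}} \sum_{x \in V} \chi_a(x).
\]

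Next, I would invoke the standard fact that for a subgroup $V \leq \Z_2^n$,
\[
\sum_{x \in V} \chi_a(x) \;=\; \begin{cases} |V| = 2^d & \text{if } a \in V^\perp, \\ 0 & \text{otherwise,} \end{cases}
\]
which follows because the restriction $\chi_a|_V$ is either the trivial character (exactly when $\langle a,x\rangle = 0$ for all $x \in V$, i.e.\ $a \in V^\perp$) or a nontrivial homomorphism to $\{\pm 1\}$, whose image-sum over $V$ vanishes. Substituting gives $(H\one_V)(a) = 2^{d/2}/\sqrt{N}$ when $a \in V^\perp$ and $0$ otherwise.

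Finally, I would check that this matches $\one_{V^\perp}$. Since $\dim V^\perp = n - d$, we have $|V^\perp| = N/2^d$, so the normalized indicator $\one_{V^\perp}$ takes the value $1/\sqrt{N/2^d} = 2^{d/2}/\sqrt{N}$ on $V^\perp$ and $0$ elsewhere, matching $(H\one_V)(a)$ pointwise. The only potential pitfall is a bookkeeping slip between the three normalization factors ($1/\sqrt{N}$ from the Fourier transform, $2^{-d/2}$ from $\one_V$, and $2^{(d-n)/2}$ on the right-hand side), but these collapse to the same constant exactly because $|V|\cdot|V^\perp| = N$, so there is no genuine obstacle.
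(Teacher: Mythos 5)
Your proof is correct. The paper states this lemma without proof (it is cited as a well-known fact), and your argument is the standard one: expand the Fourier transform, apply character orthogonality to reduce $\sum_{x \in V} \chi_a(x)$ to $|V|\cdot \indicator{a \in V^\perp}$, and check that the normalizations match because $|V|\cdot|V^\perp| = N$. The only point worth making explicit is that $\dim V^\perp = n-d$ (equivalently $|V^\perp| = N/2^d$) relies on the nondegeneracy of the pairing $\langle \cdot,\cdot\rangle$ on $\Z_2^n$ --- over $\F_2$ the ``orthogonal complement'' is really the annihilator and may intersect $V$ nontrivially, but its dimension is still $n - d$, which is all your computation needs.
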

In this way, $H$ implements a bijection between $\Gr{n,d}$ and $\Gr{n,n-d}$. 
We also make use of the following bounds on the size of $\Gr{n,d}$.
\begin{lem}\label{countgrass}
The size of $\Gr{n,d}$ is bounded by
\begin{equation}
2^{d(n-d)} < |\Gr{n,d}| < 2^{d(n-d+1)}.
\end{equation}
\end{lem}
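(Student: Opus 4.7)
The plan is to use the standard double-counting formula for the number of $d$-dimensional subspaces of $\Z_2^n$, and then estimate each factor in the resulting product. First I would count ordered $d$-tuples of linearly independent vectors in $\Z_2^n$: there are $2^n - 1$ choices for the first vector (any nonzero vector), $2^n - 2$ for the second (anything outside the line spanned by the first), and in general $2^n - 2^i$ choices for the $(i+1)$st, giving $\prod_{i=0}^{d-1}(2^n - 2^i)$ ordered $d$-tuples. Dividing by the number of ordered bases of a fixed $d$-dimensional space, namely $\prod_{i=0}^{d-1}(2^d - 2^i)$, yields
$$
|\Gr{n,d}| \;=\; \prod_{i=0}^{d-1} \frac{2^n - 2^i}{2^d - 2^i}.
$$

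Next I would bound each factor individually. Writing
$$
\frac{2^n - 2^i}{2^d - 2^i} \;=\; 2^{n-d} \cdot \frac{1 - 2^{i-n}}{1 - 2^{i-d}},
$$
the lower bound follows because $i - n < i - d$ for $d < n$, so $2^{i-n} < 2^{i-d}$ and the second factor exceeds $1$ strictly. Multiplying $d$ such strict inequalities gives $|\Gr{n,d}| > 2^{d(n-d)}$.

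For the upper bound, I would verify that the function $i \mapsto (2^n - 2^i)/(2^d - 2^i)$ is increasing on $\{0,1,\dots,d-1\}$, which is immediate from differentiating with respect to $x = 2^i$ and noting that the numerator of the derivative is $2^n - 2^d > 0$. Hence each factor is at most the value at $i = d-1$, which equals $(2^n - 2^{d-1})/2^{d-1} = 2^{n-d+1} - 1 < 2^{n-d+1}$. Taking the product of $d$ such bounds yields $|\Gr{n,d}| < 2^{d(n-d+1)}$.

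This argument is essentially routine; the only mild point to check is the monotonicity of the factor in $i$ in order to obtain a clean uniform upper bound, and the strictness of both inequalities, which are ensured respectively by $d < n$ and by the $-1$ in $2^{n-d+1} - 1$. No substantive obstacle is expected.
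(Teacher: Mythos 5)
Your proof is correct and follows essentially the same route as the paper: both derive the product formula $|\Gr{n,d}| = \prod_{i=0}^{d-1}(2^n-2^i)/(2^d-2^i)$ and then bound each factor between $2^{n-d}$ and $2^{n-d+1}$. The only cosmetic difference is that you justify the per-factor upper bound via monotonicity in $i$, while the paper states the termwise inequality directly.
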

\begin{proof}
A standard counting argument gives the explicit formula
\begin{equation}
|\Gr{n,d}| = \prod_{k = 0}^{d-1} \frac{2^n - 2^k}{2^d - 2^k}.
\end{equation}
Using the inequalities
\begin{equation}
2^{n-d} < \frac{2^n - 2^k}{2^d - 2^k} < 2^{n - d + 1}
\end{equation}
on each factor individually gives the result.
\end{proof}

We also make use of the following trivial counting lemma.
\begin{lem} \label{lem:intersection}
For $U, V \in \Gr{n,k}$, 
$$
\max(n-2k, 0) \leq \dim(U^\perp \cap V^\perp) \leq n-k.
$$
\end{lem}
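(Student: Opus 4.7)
The plan is to prove both bounds directly from elementary linear algebra facts about subspaces of $\Z_2^n$, which is a vector space over $\F_2$ of dimension $n$. Since $U, V \in \Gr{n,k}$, their orthogonal complements $U^\perp$ and $V^\perp$ are each subspaces of dimension exactly $n-k$.

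For the upper bound, I would simply observe the containment $U^\perp \cap V^\perp \subseteq U^\perp$, which immediately gives
\[
\dim(U^\perp \cap V^\perp) \leq \dim(U^\perp) = n - k.
\]

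For the lower bound, I would invoke the standard dimension formula for subspaces of a finite-dimensional vector space,
\[
\dim(U^\perp \cap V^\perp) = \dim(U^\perp) + \dim(V^\perp) - \dim(U^\perp + V^\perp).
\]
Since $U^\perp + V^\perp$ is a subspace of $\Z_2^n$, its dimension is at most $n$. Substituting this and the values $\dim(U^\perp) = \dim(V^\perp) = n - k$ yields $\dim(U^\perp \cap V^\perp) \geq (n-k) + (n-k) - n = n - 2k$. Combined with the trivial nonnegativity of dimension, this gives the desired lower bound $\max(n - 2k, 0)$.

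There is no real obstacle here — the statement is essentially a packaging of the rank–nullity style identity for sums and intersections of subspaces, and both inequalities are saturated in natural situations (e.g., the upper bound when $U = V$, and the lower bound for generic choices when $2k \leq n$).
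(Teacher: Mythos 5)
Your proof is correct; the paper states this lemma without proof (dismissing it as ``trivial''), and your argument --- the containment $U^\perp \cap V^\perp \subseteq U^\perp$ for the upper bound and the identity $\dim(A \cap B) = \dim A + \dim B - \dim(A+B)$ together with $\dim(U^\perp + V^\perp) \le n$ for the lower bound --- is exactly the standard argument the authors presumably had in mind. The one implicit step worth being aware of is that $\dim(U^\perp) = n-k$ holds over $\F_2$ because the standard bilinear form is non-degenerate on all of $\Z_2^n$, even though $U \cap U^\perp$ may be nontrivial in characteristic $2$.
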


\section{Main Result}
For a subset $Q \subset [N]$, we let $H_Q$ denote the matrix generated from the rows of $H$ indexed by $Q$.  Let $\delta_1, \dots, \delta_N$ be a set of independent Bernoulli random variables which take the value $1$ with probability $\hat{p}$.  Random variables $\delta_i$ will indicate which rows to include in our measurement matrix, $H_Q$, meaning
$$
Q = \{j \in [N]: \delta_j = 1\}.
$$  
Note that $Q$ has average cardinality $N \hat{p}$ and standard concentration arguments can be used to obtain sharp bounds on its size.
We say that a vector $v\in \R^N$ is $K$-sparse if it has at most $K$ nonzero entries. The following theorem is our main technical result. 

\begin{thm}\label{thm:kernel}
		For $\min(k, n-k) \geq 12 \log n$, where $N = 2^n$ and $K = 2^k$, there exists a positive constant $c > 0$ such that for $\hat{p} \leq  \frac{c K}{N} \log K \log(N/K)$, there exists a $K$-sparse vector in the kernel of $H_Q$ with probability $1 - o(1)$.\footnote{$o(1)$ indicates a quantity that tends to zero as $N \rightarrow \infty$.  All asymptotic notation is applied under the assumption that $N \rightarrow \infty$.}
\end{thm}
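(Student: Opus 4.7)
The plan is to apply the second moment method to a natural family of $K$-sparse candidate kernel vectors. By Lemma~\ref{lem:ortho}, for each $V \in \Gr{n,k}$ the indicator $\one_V$ is $K$-sparse and satisfies $H\one_V = \one_{V^\perp}$, so $\one_V \in \ker H_Q$ if and only if $V^\perp \cap Q = \emptyset$, i.e., none of the sampled rows lies in $V^\perp$. Setting
$$
X := \bigl|\{V \in \Gr{n,k} : V^\perp \cap Q = \emptyset\}\bigr|,
$$
it suffices to show $\prob[X > 0] = 1 - o(1)$, which by Chebyshev's inequality will follow from $\E[X^2] \leq (1 + o(1))(\E X)^2$.

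The first moment is direct: independence of the $\delta_i$ gives $\E X = |\Gr{n,k}|(1-\hat p)^{2^{n-k}}$, and the hypothesis on $\hat p$ translates (since $\log$ is base $2$) to $\hat p \cdot 2^{n-k} \leq c\,k(n-k)$; combined with Lemma~\ref{countgrass} this yields $\E X \geq 2^{(1 - O(c))k(n-k)} \to \infty$ for any sufficiently small absolute constant $c$. For the second moment I factor out $(\E X)^2$ to obtain the clean identity
$$
\frac{\E[X^2]}{(\E X)^2} = \E_{V,V'}\!\left[(1-\hat p)^{-|V^\perp \cap V'^\perp|}\right] = \sum_{j = \max(0,\,2k - n)}^{k} p_j \cdot (1 - \hat p)^{-2^{n - 2k + j}},
$$
where $p_j$ is the probability that two independent uniform $V, V' \in \Gr{n,k}$ satisfy $\dim(V \cap V') = j$, and I have used the linear algebra identity $\dim(V^\perp \cap V'^\perp) = n - \dim(V + V') = n - 2k + \dim(V \cap V')$.

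The core task is to show the displayed sum equals $1 + o(1)$; after rewriting it as $1 + \sum_j p_j[(1-\hat p)^{-2^{n-2k+j}} - 1]$ this reduces to bounding the tail by $o(1)$. A direct Gaussian-binomial count (in the spirit of Lemma~\ref{countgrass}) yields $p_j = \Theta\bigl(2^{-j(n - 2k + j)}\bigr)$ on its support, so $p_j$ decays rapidly once $j$ leaves the generic value $j_0 = \max(0, 2k - n)$. I then split the sum by whether $\hat p \cdot 2^{n - 2k + j} \leq 1$: in the small regime I use the linearisation $(1-\hat p)^{-m} - 1 \leq 2\hat p m$ together with $p_j \cdot 2^{n-2k+j} = \Theta(2^{-(j-1)(n-2k+j)})$ to turn the partial sum into a convergent geometric series contributing $O(\hat p)$; in the large regime I use the crude bound $(1-\hat p)^{-m} \leq \exp(2\hat p m)$ and compare the decay $j(n - 2k + j)$ in $-\log_2 p_j$ against the growth $2\hat p \cdot 2^{n-2k+j}\log_2 e = O(c\,k(n-k)\cdot 2^{j-k})$, which for small enough $c$ is strictly beaten at every $j$ in the admissible range. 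The hypothesis $\min(k, n-k) \geq 12\log n$, equivalent to $\min(K, N/K) \geq n^{12}$, is used twice: first to ensure that the $j = j_0$ contribution $\hat p \cdot 2^{\max(0, n - 2k)}$ is of order $k(n-k)/\min(K, N/K) = o(1)$, and second to give the polynomial slack needed to absorb the $k(n-k) \leq n^2$ factor in the large-$j$ regime. The main delicacy will be the terms with $j$ close to $k$, where $(1-\hat p)^{-|V^\perp \cap V'^\perp|}$ can be as large as $\exp(\Omega(k(n-k)))$; however the matching double-exponential smallness of $p_k = \Theta(2^{-k(n-k)})$ is exactly what renders the overall contribution negligible once $c$ is fixed below an absolute threshold, and calibrating these two exponents is what ultimately pins down the admissible range of $c$.
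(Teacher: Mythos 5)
Your proposal is correct and follows essentially the same route as the paper: the same second moment method applied to the indicators $\one_V$ for $V \in \Gr{n,k}$, the same decomposition of the second moment by intersection dimension (you index by $\dim(V \cap V')$ where the paper uses $\dim(V^\perp \cap V'^\perp) = n-2k+\dim(V\cap V')$, which is only a relabeling), and the same small/large split of that sum with the same use of the hypothesis $\min(k,n-k) \geq 12 \log n$ to make the subspace-counting exponent beat the $\exp(O(c)\,k(n-k))$ growth near $j = k$.
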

\bp
We will define $p := - \ln (1 - \hat{p})$ for future convenience, and note that $\hat{p} \leq p \leq 2\hat{p}$, for small enough $\hat{p}$. 

  We restrict our attention to the $K$-sparse vectors that correspond to $\one_{V}$ for $V \in \Gr{n,  k}$, the indicator functions of subspaces of dimension $k$. For such $V$, set $X_V$ to be the indicator function for the event that $Q \cap V^\perp =\emptyset$. Define
 \begin{equation}
 X = \sum_{V \in \Gr{n, k}} X_V.
 \end{equation}
 Observe that by Lemma \ref{lem:ortho}, if $X$ is non-zero then there exists a $K$-sparse vector in the kernel of $H_Q$.
 We proceed by the second moment method to show that $X$ is nonzero with high probability. By the second moment method (e.g. \cite{alonspencer}),
 \begin{equation}
 \P(X = 0) \le \frac{\Var X }{(\E X)^2}.
 \label{eq:prob-bound}
 \end{equation}
We can easily obtain an expression for the first moment:
\begin{align*} 
\E X  &=  | \Gr{n, k} | \cdot \E X_V  \\
&=   | \Gr{n, k}| \left( 1 - \hat{p}  \right)^{|V^\perp|} \\
&= | \Gr{n, k}| \exp(-p\frac{N}{K}) \\
& \geq \exp( (\ln 2 - 2 c) k(n-k)).
\end{align*}

The second moment requires a more delicate calculation.  We partition the sum into pairs of orthogonal complements with the same dimension of intersection.  By Lemma \ref{lem:intersection}, and letting $d_0$ denote $ \max(n-2m, 0)$, we have
\begin{align}
		\frac{\Var X}{(\E X)^2}  &= \frac{\sum_{U, V \in \Gr{n,k}} \Cov(X_U, X_V)}{|\Gr{n,k}|^2 \left(\E X_U\right)^2} \nonumber \\
		& = \frac{\sum_{d=d_0}^{n-k} \sum_{U,V: \dim(U^\perp \cap V^{\perp}) = d} \Cov( X_U X_V)  }{|\Gr{n,k}|^2 \left(\E X_U\right)^2}. \label{eq:sum1}
\end{align}

We can explicitly compute each term in the sum above as follows.
\begin{claim}
		For $U, V \in \Gr{n,k}$ such that $\dim(U^{\perp} \cap V^{\perp}) = d$, we have
		\begin{equation*}
				\frac{\Cov(X_U, X_V)}{\left(\E X_U\right)^2} = \exp(p2^d) - 1.
		\end{equation*}
\end{claim}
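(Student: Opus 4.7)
The plan is to directly compute the two expectations $\E X_U X_V$ and $\E X_U \E X_V$ by writing $X_U$ as a product over indices in $U^\perp$, and then expressing the difference in closed form.

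First I would observe that $X_U = \prod_{i \in U^\perp}(1 - \delta_i)$, since $X_U = 1$ exactly when none of the $\delta_i$ for $i \in U^\perp$ equal $1$. By the independence of the $\delta_i$ and the definition $p = -\ln(1-\hat{p})$,
\begin{equation*}
\E X_U = (1 - \hat{p})^{|U^\perp|} = \exp(-p \cdot |U^\perp|) = \exp(-p N/K),
\end{equation*}
using $|U^\perp| = 2^{n-k} = N/K$. In particular $(\E X_U)^2 = \exp(-2pN/K)$, and the same formula holds with $U$ replaced by $V$.

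Next, for the joint expectation, the event $\{X_U = 1\} \cap \{X_V = 1\}$ is the same as $Q \cap (U^\perp \cup V^\perp) = \emptyset$, and independence of the $\delta_i$ again gives
\begin{equation*}
\E[X_U X_V] = (1-\hat{p})^{|U^\perp \cup V^\perp|} = \exp(-p \cdot |U^\perp \cup V^\perp|).
\end{equation*}
By inclusion-exclusion on sets, $|U^\perp \cup V^\perp| = |U^\perp| + |V^\perp| - |U^\perp \cap V^\perp| = 2N/K - 2^d$, where the last equality uses the hypothesis $\dim(U^\perp \cap V^\perp) = d$. Substituting gives $\E[X_U X_V] = \exp(-2pN/K)\exp(p 2^d)$.

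Subtracting, $\Cov(X_U, X_V) = \exp(-2pN/K)(\exp(p 2^d) - 1)$, and dividing by $(\E X_U)^2 = \exp(-2pN/K)$ yields the claim. No step is genuinely hard; the only thing to be careful with is translating between the dimension of the intersection of subspaces (which is $d$, giving $|U^\perp \cap V^\perp| = 2^d$) and the size of the intersection, and keeping the sign convention for $p$ straight in the two exponential identities.
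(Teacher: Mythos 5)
Your proof is correct and follows essentially the same route as the paper's: compute $\E[X_U X_V]$ as $\exp(-p\,|U^{\perp}\cup V^{\perp}|)$ via independence of the $\delta_i$, apply inclusion--exclusion to get $|U^{\perp}\cup V^{\perp}| = 2N/K - 2^d$, and divide by $(\E X_U)^2 = \exp(-2pN/K)$. The only difference is that you spell out the intermediate steps (writing $X_U$ as a product over $U^{\perp}$) slightly more explicitly than the paper does.
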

\begin{proof}
		Observe that
		\begin{align*}
				\E X_U X_V & = \P(U^{\perp} \cap Q = \emptyset \land V^{\perp} \cap Q = \emptyset) \\
				& = \exp(-p |U^{\perp} \cup V^{\perp}|) \\
				& = \exp(-2 p |U^{\perp}| + p |U^{\perp} \cap V^{\perp}|) \\
				& = \left(\E X_U\right)^2 \exp(p 2^d).
		\end{align*}
\end{proof}
We plug this expression back to the sum \eqref{eq:sum1}, in order to arrive at
\begin{equation*}
		\frac{\Var X}{(\E X)^2} = \sum_{d = d_0}^{n-k} \sum_{U, V: \dim(U^{\perp} \cap V^{\perp}) = d} \frac{1}{|\Gr{n,k}|^2} \left(\exp(p 2^d) - 1\right).
\end{equation*}

Let us use $T(n, k, d)$ to denote number of pairs $U, V \in \Gr{n, k}$ such that $\dim (U^{\perp} \cap V^{\perp}) = d$. With this notation, the entire sum simplifies to
\begin{equation*}
		\sum_{d = d_0}^{n-k} \frac{T(n,k, d)}{|\Gr{n,k}|^2} \left(\exp(p 2^d) - 1\right).
\end{equation*}

We will split this sum into two parts and bound them separately
\begin{align*}
		%\sum_{d=d_0}^{n-k} \frac{T(n,k,d)}{|\Gr{n,k}|^2} \left(\exp(p 2^d) - 1\right) 
		%\frac{\Var(X)}{(\E X)^2}
			& \sum_{d=d_0}^{n - k - 3 \log n} \frac{T(n,k,d)}{|\Gr{n,k}|^2} \left(\exp(p 2^d) - 1\right) \\
			& \qquad + \sum_{d = n - k - 3 \log n}^{n-k} \frac{T(n,k,d)}{|\Gr{n,k}|^2} \left(\exp(p2^d) - 1\right) \\
			& =: (I) + (II).
\end{align*}

The first part of the summation is easy to control: for $d < n - k - 3 \log n$ we have $p 2^d \leq \frac{2 c}{n}$, which implies $\exp(p 2^d) - 1 \leq \frac{4c}{n}$, and
\begin{align}
		(I) & \leq \sum_{d = d_0}^{n-k - 3 \log n} \frac{T(n,k,d)}{|\Gr{n,k}|^2} \frac{4c}{n} \nonumber\\ 
		& \leq \frac{4c}{n} \sum_d \frac{T(n,k,d)}{|\Gr{n,k}|^2} \nonumber \\ 
		& = \frac{4c}{n} = o(1). \label{eq:I-bound} 
\end{align}

We can now turn our attention to bounding $(II)$.
\begin{claim}
		\label{claim:T-bound}
		For $d \geq n - k - 3 \log n$, we have
		\begin{equation*}
				\frac{T(n,k,d)}{|\Gr{n,k}|^2} \leq \exp\left(-\frac{\ln(2)}{2} k (n-k)\right).
		\end{equation*}
\end{claim}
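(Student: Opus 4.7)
My plan is to reduce to a standard Grassmannian counting problem via duality. The identity $(U+V)^{\perp} = U^{\perp} \cap V^{\perp}$, together with inclusion--exclusion $\dim(U+V) = 2k - \dim(U \cap V)$, implies that $\dim(U^{\perp} \cap V^{\perp}) = d$ is equivalent to $\dim(U \cap V) = j$, where $j := d - n + 2k$. Under the hypothesis $d \geq n - k - 3\log n$, this gives $j \geq k - 3\log n$; in other words, the intersection $U \cap V$ must occupy almost all of $U$.

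I would then upper bound $T(n,k,d)$ by the following overcount: pick $U \in \Gr{n,k}$, then a $j$-dimensional subspace $W \subseteq U$ playing the role of $U \cap V$, and finally extend $W$ to a $k$-dimensional subspace $V \supseteq W$ in $\Z_2^n$ (which, after quotienting by $W$, contributes a factor of $|\Gr{n-j,k-j}|$). This yields
\begin{equation*}
T(n,k,d) \leq |\Gr{n,k}| \cdot |\Gr{k,j}| \cdot |\Gr{n-j,k-j}|,
\end{equation*}
and although pairs with $\dim(U \cap V) > j$ are overcounted, this is harmless for an upper bound. Substituting the estimates of Lemma \ref{countgrass} into both sides, using $|\Gr{n,k}|^2 > 2^{2k(n-k)}$ in the denominator, and exploiting the relation $d = n - 2k + j$, the exponent of $2$ in the ratio collapses cleanly to
\begin{equation*}
\frac{T(n,k,d)}{|\Gr{n,k}|^2} < 2^{-jd + k}.
\end{equation*}

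It remains to insert the lower bounds on $j$ and $d$: expanding the product gives $jd \geq (k - 3\log n)(n - k - 3\log n) \geq k(n-k) - 3n\log n$, since the omitted cross terms are non-negative. Thus the ratio is at most $2^{-k(n-k) + 3n\log n + k}$, and I would finish by checking the numerical inequality $k(n-k)/2 \geq 3n\log n + k$ under the hypothesis $\min(k, n-k) \geq 12\log n$. Assuming WLOG $k \leq n/2$, a short case split on whether $k \leq n/4$ or $k > n/4$ gives enough room: in the first case one uses $n-k \geq 3n/4$ together with $k \geq 12\log n$, and in the second one uses $k(n-k) \geq n^2/8$, which dominates for $n$ large. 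The main obstacle is precisely this last piece of bookkeeping --- the constant $12$ in the hypothesis is calibrated against the factor $3$ in the range of $d$, so one must arrive at the clean form $-jd + k$ in the exponent above rather than losing multiplicative factors of $\log n$ to crude Grassmannian bounds.
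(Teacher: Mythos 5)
Your counting is correct and takes a genuinely different (in fact dual) route from the paper's. The paper bounds $T(n,k,d)$ by first choosing the intersection $T=U^{\perp}\cap V^{\perp}\in\Gr{n,d}$ and then the two quotients $U^{\perp}/T,V^{\perp}/T$ inside $\F_2^{n-d}$, obtaining $T(n,k,d)\leq|\Gr{n,d}|\,|\Gr{n-d,n-k-d}|^2$ and then maximizing a quadratic in $d$ over the range; you dualize to $j=\dim(U\cap V)=d-n+2k$ and count triples $(U,W,V)$ with $W\subseteq U\cap V$, obtaining $|\Gr{n,k}|\,|\Gr{k,j}|\,|\Gr{n-j,k-j}|$. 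Both overcounts are legitimate, and your exponent does collapse to $-jd+O(k)$: cancelling one factor of $|\Gr{n,k}|$ before invoking Lemma~\ref{countgrass} gives $-jd+k$, while bounding numerator and denominator separately as you literally describe gives $-jd+2k$; either is serviceable. Your form of the bound is arguably cleaner than the paper's endpoint evaluation.

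The gap is in the closing numerics. The inequality $k(n-k)/2\geq 3n\log n+k$ is \emph{false} for some admissible pairs: take $n=184$, $k=93$, so that $n-k=91\geq 12\log n\approx 90.3$; then $k(n-k)/2=4231.5$ while $3n\log n+k\approx 4246$. (With the correct $+2k$ the failure is worse.) The culprit is precisely the cross term you discarded: $jd\geq k(n-k)-3n\log n+9\log^2 n$, and that $+9\log^2 n$ is exactly the slack needed near the boundary $\min(k,n-k)=\lceil 12\log n\rceil$ with $n$ close to $24\log n$; retaining it, the inequality $k(n-k)/2\geq 3n\log n-9\log^2 n+2k$ does hold for all admissible $(n,k)$, though verifying it still takes a short computation (the worst case is $k=n-\lceil 12\log n\rceil$, where one uses $n\geq 24\log n$). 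Separately, your ``WLOG $k\leq n/2$'' is not legitimate as applied: the target numerical inequality is not symmetric under $k\mapsto n-k$ because of the additive $+k$ term, so you cannot restrict to $k\leq n/2$ when checking it. A genuine WLOG is available at the level of the whole claim --- the map $U\mapsto U^{\perp}$ gives $T(n,k,d)=T(n,n-k,j)$ with the range condition and the symmetric target bound transforming correctly --- but that must be said explicitly; alternatively, keep the cross term and treat all admissible $k$ directly, which avoids the case split entirely.
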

\begin{proof}
		First, we have the bound $T(n,k,d) \leq |\Gr{n,d}| |\Gr{n - d, n - k - d}|^2$. Indeed, to choose two subspaces $U^{\perp}, V^{\perp}$ of dimension $k$ with $\dim(U^{\perp} \cap V^{\perp}) = d$, we can first choose $T = U^{\perp} \cap V^{\perp}$ as a subspace of $\F_2^n$ (there are $|\Gr{n,d}|$ ways of doing this), and then we can consider the quotient space $\F_2^n/T$ and count the number of disjoint subspaces $U/T, V/T \subset \F_2^n / T$ --- the number of such choices is upper bounded by $|\Gr{n-d, n-k-d}|^2$ --- the number of all pairs of subspaces $U/T, V/T \in \F_2^n/T$.

		Applying Lemma~\ref{countgrass} to $|\Gr{n,d}|$ and $|\Gr{n-d, n-k-d}|$, we obtain
		\begin{equation*}
				T(n,k,d) \leq \exp\left( \ln(2)\left[d(n-d + 1) + 2 (n-k-d+1)(k+1)\right]\right).
		\end{equation*}

		The quadratic in the exponent is maximized for $d = \frac{n - 2k - 1}{2}$, hence in the range $d \geq n - k - 3 \log n$, the maximum is attained exactly at $d = n - k - 3 \log n$. This yields
		\begin{align*}
				T(n,k,d) & \leq \exp\left( \ln(2) \left[ (n-k - 3\log n)(k + 3 \log n + 1) + 2(3 \log n + 1)(k + 1)\right]\right) \\
				& \leq \exp\left( \ln(2) \left[ (n-k)(\frac{5}{4} k) + \frac{1}{4} (n - k)k\right]\right) \\
				& \leq \exp\left( \ln(2) \left[\frac{3}{2} (n- k)k\right]\right),
		\end{align*}
		where the second inequality follows from the fact that $\min(k, n-k) \geq 12 \log n$.

		On the other hand, using Lemma~\ref{countgrass} again, we have $\frac{1}{|\Gr{n,k}|^2} \leq 2^{-2 k (n-k)}$ and the statement of the claim follows by combining these two inequalities.
\end{proof}

Now we can introduce a simple upper bound of the sum $(II)$
\begin{align}
	 \sum_{d = n - k - 3 \log n}^{n-k} \frac{T(n,k,d)}{|\Gr{n,k}|^2}\left(\exp(p2^d) - 1\right) & \leq 3 (\log n) 2^{- \frac{1}{2} k(n-k)} \exp\left(p 2^{n-k}\right) \nonumber \\
	 & \leq 3 (\log n) 2^{(\frac{2c}{\ln(2)} - \frac{1}{2}) k (n-k)} \nonumber \\
	 & \leq o(1), \label{eq:II-bound}
\end{align}
where the first inequality follows from Claim~\ref{claim:T-bound}, the second one follows from $p 2^{n-k}  < 2 c k (n-k)$, and the third can one be applied as soon as $\frac{2c}{\ln(2)} < \frac{1}{2}$. The statement of the Theorem now follows by combining \eqref{eq:prob-bound}, \eqref{eq:I-bound} and \eqref{eq:II-bound}.
\ep

We can now state our main result in terms of sparse recovery.
\begin{thm}
Let $N$ and $K$ be as in Theorem \ref{thm:kernel}.
For there to exist a method to recover every $K$-sparse vector from $H_Q$, for any $K$ such that $\min(K, N/K) \geq \log^{12} N$, the expected cardinality of the number of rows of $H_Q$ must be $\Omega(K \log K \log  (N/K) )$.  Further, for any constant $\delta>0$, the expected number of rows of $H_Q$ must be $\Omega(K \log K \log (N/K))$ for $H_Q$ to have the RIP property.
\end{thm}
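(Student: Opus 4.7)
The plan is to deduce this from Theorem~\ref{thm:kernel} by contraposition, after verifying that the parameter regimes line up. Writing $K = 2^k$ and $N = 2^n$, the hypothesis $\min(K, N/K) \ge \log^C N$ becomes $\min(k, n-k) \ge C \log n$, so picking $C \ge 12$ makes the technical condition $\min(k, n-k) \ge 12 \log n$ of Theorem~\ref{thm:kernel} hold for all large enough $n$.

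Next I would suppose, for contradiction, that the expected cardinality $N \hat p$ of the sampled row set is less than $c K \log K \log(N/K)$ for the constant $c$ produced by Theorem~\ref{thm:kernel}. This is exactly the hypothesis $\hat p \le \frac{cK}{N} \log K \log(N/K)$, so the theorem supplies a nonzero $K$-sparse vector $v \in \ker H_Q$ with probability $1 - o(1)$ over the choice of $Q$.

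The existence of such a $v$ immediately precludes both uniform $K$-sparse recovery and the restricted isometry property. For recovery: the zero vector and $v$ are both $K$-sparse and share the measurement $H_Q v = 0 = H_Q \cdot 0$, so no decoder $f$ can simultaneously satisfy $f(0) = 0$ and $f(0) = v$; hence no map $f$ can reconstruct every $K$-sparse input from its image under $H_Q$. For the RIP: the lower RIP inequality $(1-\delta)\|v\|_2^2 \le \|H_Q v\|_2^2$ demands $(1-\delta)\|v\|_2^2 \le 0$, which is impossible for any $\delta < 1$ since $v \ne 0$; so the RIP of order $K$ fails with any meaningful constant.

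I do not foresee a genuine obstacle — all the work is already concentrated in Theorem~\ref{thm:kernel}. The only items requiring a moment of bookkeeping are the conversion between the expected row count $N \hat p$ and the sampling probability $\hat p$, and the verification that $C \ge 12$ suffices to invoke the technical hypothesis of the main theorem for all sufficiently large $n$.
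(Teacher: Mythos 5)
Your proposal is correct and follows essentially the same route as the paper: invoke Theorem~\ref{thm:kernel} in contrapositive form and observe that a nonzero sparse kernel vector kills both injectivity on sparse vectors and the lower RIP inequality. The only cosmetic difference is that the paper takes a $2K$-sparse kernel vector $x$ and splits it as $x = y - z$ with $y, z$ both $K$-sparse, whereas you pair a $K$-sparse kernel vector with the zero vector; both yield two $K$-sparse vectors with identical measurements, and the asymptotic bound is unaffected.
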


\begin{proof}
		By Theorem \ref{thm:kernel}, there exists a $2K$-sparse vector $x$ in the kernel of $H_Q$ with high probability if the expected number of rows of $H_Q$ is $o(K \log K \log(N/K))$.  Let us write $x = y - z$ where $y$ and $z$ are both $K$-sparse vectors.  Then $H_Q y = H_Q z$, which proves that $H_Q$ is not injective when restricted to the set of all $K$-sparse vectors.  The statement about the RIP property follows directly from the definition.
\end{proof}

%%% AUTHOR: optional acknowledgments here
%\section*{Acknowledgments} %%  you may comment this out if no Ackno
%The authors are grateful to the anonymous reviewers for finding
%a bug in the main result.

%%% AUTHOR:
%%% Bibliography goes here. Note that the arXiv cannot process bibtex
%%% or biber bibliographies.  Example of acceptable bibliograpy format:
\bibliographystyle{amsplain}

%% AUTHOR: You can generate such a bibliography from a .bib file by 
%% running pdflatex/bibtex/pdflatex/pdflatex and then pasting the .bbl file
%% between \begin{thebibliography} and \end{bibliography}

%%% AUTHOR: Include a short description of each author following the
%%% structure below. Use the same short tags used previously.  
%%% Use \imageat{} and \imagedot{} instead of "@" and "." in
%%% email addresses-this replaces the symbols with graphics to avoid 
%%% e-mail address harvesting from the .pdf file
\begin{dajauthors}
\begin{authorinfo}[jarek]
  Jaros\l aw B\l asiok\\
  Columbia University\\
  New York, United States of America\\
  jb4451\imageat{}columbia\imagedot{}edu
\end{authorinfo}
\begin{authorinfo}[patrick]
  Patrick Lopatto\\
  Brown University\\
  Providence, United States of America\\
  patrick\_lopatto\imageat{}brown\imagedot{}edu \\
  \url{https://lopat.to/}
\end{authorinfo}
\begin{authorinfo}[kyle]
  Kyle Luh\\
  University of Colorado Boulder\\
  Boulder, United States of America\\
  kyle\imagedot{}luh\imageat{}colorado\imagedot{}edu\\
  \url{https://sites.google.com/view/kluh/}
\end{authorinfo}
\begin{authorinfo}[Jake]
  Jake Marcinek\\
  Radix Trading\\
  Chicago, United States of America\\
  jakemarcinek\imageat{}gmail\imagedot{}com
\end{authorinfo}
\begin{authorinfo}[Shravas]
  Shravas Rao\\
  Portland State University\\
  Portland, United States of America\\
  shravas\imageat{}pdx\imagedot{}edu \\
  \url{https://web.cecs.pdx.edu/~shravas/}
\end{authorinfo}
\end{dajauthors}

\end{document}